\documentclass[10pt,conference]{IEEEtran}
\usepackage{color}
\usepackage[svgnames]{xcolor}
\usepackage{graphicx}
\usepackage{amsmath}
\usepackage{amsthm}
\usepackage{mathtools}
\usepackage{makecell}
\usepackage{physics}
\usepackage{bm}
\usepackage{hyperref}
\usepackage{multirow}

\usepackage{physics}

\usepackage{amsmath}
\usepackage{amssymb}
\usepackage{amsthm}
\usepackage{graphicx}
\usepackage{float}
\usepackage{listings}
\usepackage{xcolor}
\usepackage{subcaption}
\usepackage{tikz}
\usepackage{pgfplots}
\pgfplotsset{compat=1.17}
\usetikzlibrary{arrows.meta, calc, positioning}
\usepackage[ruled, linesnumbered]{algorithm2e}
\usepackage{algpseudocode}
\usepackage{algcompatible}
\usepackage{quantikz}

\usepackage{booktabs}
\usepackage{siunitx}

\usepackage{tikz} 
\usetikzlibrary{matrix,positioning}
\usepackage{amsfonts}
\usepackage{amssymb}
 
\newtheorem{definition}{Definition}[section]

\newtheorem{theorem}{Theorem}[section]

\newtheorem{proposition}{Proposition}

\hypersetup{
   colorlinks = true,
    urlcolor=blue
}
\newtheorem{remark*}{Remark}

\begin{document}

\markboth{First author}
{Explicit Block Encoding of Difference-of-Gaussian Operators on a Periodic Grid}

\title{Explicit Block Encoding of Difference-of-Gaussian Operators on a Periodic Grid}

 \author{\IEEEauthorblockN{First author\IEEEauthorrefmark{1}, Second author\IEEEauthorrefmark{1}, Third author\IEEEauthorrefmark{1}, Fourth author\IEEEauthorrefmark{1}, Last author\IEEEauthorrefmark{1}\IEEEauthorrefmark{3}
 }
 \IEEEauthorblockA{\IEEEauthorrefmark{1} Redacted}}

 \author{\IEEEauthorblockN{Jishnu Mahmud\IEEEauthorrefmark{1}\IEEEauthorrefmark{3}, John Winship\IEEEauthorrefmark{2}, Tom Lash\IEEEauthorrefmark{1}\IEEEauthorrefmark{2}, James Ostrowski\IEEEauthorrefmark{1}, Rebekah Herrman\IEEEauthorrefmark{1} \IEEEauthorrefmark{4}
 }
 \IEEEauthorblockA{\IEEEauthorrefmark{1}Industrial and Systems Engineering, University of Tennessee at Knoxville, Knoxville, TN, USA}
 \IEEEauthorblockN{\IEEEauthorrefmark{2}Vibrint, Hanover, MD, USA}
 
 Email: \IEEEauthorrefmark{3}jmahmud@vols.utk.edu, \IEEEauthorrefmark{4}rherrma2@utk.edu}

\maketitle
\begin{abstract}

The Difference-of-Gaussian (DoG) is a widely used operator across applications, including image processing (feature and edge detection), quantum machine learning, and finite-difference methods (approximations of the Laplacian-of-Gaussian). In this paper, we construct an explicit quantum block encoding of the DoG operator on a periodic grid, exploiting its natural probabilistic structure. The central observation is that the DoG admits a natural decomposition to two normalized Gaussian distributions, each preparable by explicit and efficient circuits, with the negation encoded using a single Pauli-$Z$ gate on a branch-indicator qubit. This enables the operator's block encoding to be directly mapped to the Linear Combination of Unitaries framework without requiring signed amplitude loading, quantum random-access memory, or any other black-box oracles. The proposed method achieves a constant subnormalization factor $\lambda = 2$ independent of the grid size $N$, the spatial dimension $D$, and the stencil width. Additionally, we show that the DoG operator is diagonalized by the discrete Fourier basis, which allows us to derive an exact closed-form expression for the block-encoding success probability in terms of the input signal's power spectrum, weighted by the operator's transfer function. Finally, we prove that the expression reduces to $\mathcal{O}(h^4)$ scaling with respect to grid spacing $h$ as the periodic grid becomes finer. This implementation provides an explicit construction method for a tunable, wide-stencil bandpass filter whose frequency response is controlled by two Gaussian scale parameters.

\end{abstract}

\section{Introduction}\label{sec:introduction}

The Difference-of-Gaussian (DoG) operator is a fundamental bandpass filter widely used in image processing for feature and edge detection \cite{kong2013generalized, polakowski1997computer, mu2016multiscale, assirati2014performing}, and serves as a smooth, discrete approximation to the Laplacian-of-Gaussian (LoG) \cite{marr1980theory}. More broadly, differences of discrete operators are a foundation for finite-difference approximations to partial differential equations (PDEs) \cite{leveque1998finite, liszka1980finite, smith1985numerical, vuik2023numerical, mirzaee2022solution} and appear throughout computational signal analysis \cite{sotak1989laplacian, neycenssac1993contrast}. In all of these settings, the spatial complexity of representing the underlying system grows as $\mathcal{O}(N^D)$, where $N$ is the number of grid points per dimension and $D$ is the spatial dimension \cite{bungartz2004sparse, sogabe2022krylov}, rendering high-resolution, multi-dimensional problems intractable on classical hardware.

Quantum computing offers a potential path for overcoming these dimensional bottlenecks. Several quantum edge-detection and filtering algorithms have been proposed based on LoG, DoG, discrete gradient, and shift-based operators \cite{wu2022quantum, sundani2019identification, li2020quantum, yao2017quantum, chetia2021quantum, liu2023quantum, zhou2019quantum, yuan2019fast, yuan2024quantum}. Many of these approaches employ basis-state image encodings, such as the Novel Enhanced Quantum Representation (NEQR) \cite{fan2023quantum, wu2022quantum, li2020quantum, chetia2021quantum, liu2023quantum, zhou2019quantum}, in which pixel intensities are stored in computational-basis registers. While versatile, NEQR-based methods implement filtering via quantum arithmetic circuits (such as adders, comparators, and multi-controlled rotations), whose circuit depth and gate count scale with both the image size and the bit depth of the pixel values. Critically, these constructions treat the filter coefficients as generic numerical data, without exploiting any special structure the coefficients may possess.

An alternative method involves encoding continuous data as the probability amplitudes of a quantum state, a technique known as amplitude encoding, allowing linear filtering operations via block encoding \cite{gilyen2019quantum} to be performed on the state. In this framework, a non-unitary operator is embedded as the top-left block of a larger unitary, which enables the resultant operator to be integrated with modern quantum linear-algebra primitives such as the quantum singular value transform (QSVT) and qubitization \cite{gilyen2019quantum, berry2007efficient, childs2017quantum}. However, standard block-encoding constructions typically rely on black-box oracle access such as quantum random access memory (QRAM) or unary iteration to load the matrix entries of~$A$ \cite{berry2019qubitization, babbush2018encoding}. Recent work has demonstrated that such general-purpose, structure-agnostic methods incur intractable gate costs even for modest system sizes, and that the T-gate complexity of unstructured oracles can be large \cite{kuklinski2025efficient, zecchi2026block}.

In light of these observations, research has moved beyond the generalized block-encoding formalism to develop explicit circuits that implement encoding methods exploiting the mathematical structure of operators. For the discrete Laplacian operator, Camps et al. proposed a quantum circuit for banded circulant encodings \cite{camps2024explicit}, and Kharazi et al. designed methods for Dirichlet and Neumann boundary conditions \cite{kharazi2025explicit}. Subsequently, Sturm and Schillo \cite{sturm2025efficient} proposed a block-encoding circuit that achieves an optimal subnormalization factor for the standard 3-point Laplacian stencil using only Clifford gates. A common focus in this line of work is that efficient and explicit block encodings require the circuit to respect the mathematical structure of the target operator \cite{kuklinski2025efficient}.

In this work, we develop an efficient block-encoding for the DoG operator that achieves a constant subnormalization factor independent of the grid size or dimension. The central observation of this paper is that the DoG operator possesses a natural probabilistic structure that can be directly leveraged for efficient quantum implementation. Specifically, the discrete DoG stencil has the form $A_h = \sum_{t\in T}(p_t - q_t)\, S_t$, where $\{p_t\}$ and $\{q_t\}$ are two normalized discrete Gaussian probability distributions on a discrete grid and $T$ and $S_t$ are cyclic shift operators. We normalize $p$ and $q$ post-discretization, asserting the condition $\sum_t p_t = \sum_t q_t = 1$ and nonnegativity, which consequently causes the 1-norm of the coefficient vector to satisfy $\sum_t |p_t - q_t| \leq 2$. The operator, therefore, decomposes as a difference of two separately preparable components. We observe that the structure maps directly onto the Linear Combination of Unitaries (LCU) framework \cite{berry2015simulating, childs2012hamiltonian}. The magnitudes $\sqrt{p_t}$ and $\sqrt{q_t}$ are loaded by resource-efficient Gaussian state-preparation circuits \cite{plesch2011quantum, kuklinski2025simpler}, while the sign difference is encoded via a Pauli-$Z$ gate acting on a single indicator qubit. No signed amplitude loading, QRAM, or arithmetic circuitry is required in the proposed method.

Building on this observation, we make the following contributions:

\begin{enumerate}
    \item \textbf{Explicit block encoding.} We construct a quantum circuit that block-encodes $A_h$ with a constant subnormalization factor of $\lambda = 2$, independent of the grid size $N$, dimension $D$, or stencil width $|T|$. The non-Clifford components are the Gaussian state-preparation oracles and controlled shifts, for which efficient, explicit circuits exist \cite{plesch2011quantum, kuklinski2025simpler, draper2004logarithmic}.
    \item \textbf{Spectral characterization.} We show that $A_h$ is diagonalized by the discrete Fourier basis with eigenvalues given by the difference of the Fourier transforms of~$ p$ and~$ q$, providing an exact description of its bandpass behavior and confirming Hermiticity under symmetric kernels.
    \item \textbf{Success probability analysis.} We derive an exact closed-form expression for the success probability in terms of the input signal's power spectrum weighted by the operator's transfer function. For smooth input functions, this expression scales as $\mathcal{O}(h^4)$, where $h = 1/N$ is the grid spacing. This matches the theoretically optimal scaling of rigid finite-difference Laplacian encodings \cite{sturm2025efficient}, while offering the additional flexibility of a tunable, wide-stencil bandpass filter whose frequency response is controlled by the two Gaussian scale parameters.
\end{enumerate}

\section{Mathematical Preliminaries}\label{sec:prelims}

We start by defining the mapping of continuous spatial functions onto a discrete Hilbert space and characterizing the DoG operator.  This discretization is a preprocessing step for various applications that aim to apply the DoG filter in quantum systems and, therefore, requires formatting the data for unitary evolution.

\begin{definition}[Grid and Quantum State]\label{def: grid}
Let $\Omega_D = [0, 1)^D \subset \mathbb{R}^D$ be a periodic spatial domain. We discretize this domain into a $D$-dimensional periodic grid $\mathbb{Z}_N^D$ with $N=2^n$ ($n \in \mathbb{N}$) points per dimension.  Here $n$ is the number of qubits assigned to each spatial coordinate, and the discretization scale $h=1/N$. 

The sampling of  a continuous square-integrable function $v: \Omega_D \to \mathbb{C}$ on this grid is $v_h(j) = v(hj)$ for $j \in \mathbb{Z}_N^D$ and the corresponding  quantum state $\ket{v_h}$ is defined as the following superposition state:

\begin{equation}
    \ket{v_h} = \frac{1}{\|v_h\|_2} \sum_{j \in \mathbb{Z}_N^D} v_h(j) \ket{j}
\end{equation}
where $\ket{j} = \ket{j_1} \otimes \dots \otimes \ket{j_D}$ is the computational basis state representing the grid indices.

\end{definition}

In the context of signal processing and partial differential equations, the Difference-of-Gaussian operator serves as a band-pass filter, bringing out features within a specific range of spatial frequencies. Figure \ref{fig: stencil} illustrates discrete periodic grids and highlights the nearest neighbors of the center points in (a) $D=1$ and (b) $D=2$ dimensions.

\begin{figure*}[t]
\centering
\begin{tikzpicture}[
    every node/.style={font=\small},
    grid point/.style={circle, draw, fill=blue!60, inner sep=0pt, minimum size=6pt},
    center point/.style={circle, draw=red!80, fill=red!60, inner sep=0pt, minimum size=8pt},
    stencil point/.style={circle, draw=orange!80, fill=orange!50, inner sep=0pt, minimum size=7pt},
    shift arrow/.style={-{Stealth[length=5pt]}, thick, blue!70!black},
    wrap arrow/.style={-{Stealth[length=4pt]}, thick, red!60!black, dashed},
]

\begin{scope}[shift={(0,-1.5)}]

    \node[font=\normalsize, anchor=south] at (0, 4.5) {(a) 1D periodic grid, $N = 8 = 2^3$};

    \def\R{2.2}
    \def\N{8}

    \draw[gray!30, thick] (0,0) circle (\R);

    \foreach \j in {0,...,7} {
        \pgfmathsetmacro{\angle}{90 - \j * 360 / \N}
        \coordinate (p\j) at (\angle:\R);
    }

    \foreach \j in {2,3,4,5,6} {
        \pgfmathsetmacro{\ang}{90 - \j * 360 / \N}
        \node[grid point] at (p\j) {};
        \node[font=\footnotesize] at ($(\ang:\R) + (\ang:0.4)$) {$\j$};
    }

    \foreach \j in {1, 7} {
        \pgfmathsetmacro{\ang}{90 - \j * 360 / \N}
        \node[stencil point] at (p\j) {};
        \node[font=\footnotesize] at ($(\ang:\R) + (\ang:0.4)$) {$\j$};
    }

    \node[center point] at (p0) {};
    \node[font=\footnotesize] at ($(90:\R) + (90:0.4)$) {$0$};

    \pgfmathsetmacro{\angStart}{90 - 2 * 360/\N + 4}
    \pgfmathsetmacro{\angEnd}{90 - 3 * 360/\N - 4}
    \draw[shift arrow]
        (\angStart:\R+0.15) arc[start angle=\angStart, end angle=\angEnd, radius=\R+0.15];
    \node[font=\footnotesize, blue!70!black] at
        ({90 - 2.5*360/\N}:\R+0.55) {$S_{+1}$};

    \pgfmathsetmacro{\angWS}{90 - 7 * 360/\N + 4}
    \pgfmathsetmacro{\angWE}{90 - 0 * 360/\N - 4}
    \draw[shift arrow, densely dashed]
        (\angWS:\R+0.15) arc[start angle=\angWS, end angle=\angWE, radius=\R+0.15];

\end{scope}

\begin{scope}[shift={(4.2, -0.5)}]
    \node[center point, label={[font=\scriptsize]right: center $j$}] at (0, 0.4) {};
    \node[stencil point, label={[font=\scriptsize]right: stencil ($t \in T$)}] at (0, 0) {};
    \node[grid point, label={[font=\scriptsize]right: grid point}] at (0, -0.4) {};
\end{scope}

\begin{scope}[shift={(7.5, 0.6)}]

    \node[font=\normalsize, anchor=south] at (2.25, 2.4) {(b) 2D periodic grid, $N = 4 = 2^2$};

    \def\sp{1.5}

    \foreach \i in {0,...,3} {
        \foreach \j in {0,...,3} {
            \coordinate (g\i\j) at (\j*\sp, -\i*\sp);
        }
    }

    \foreach \i in {0,...,3} {
        \draw[gray!30] (0, -\i*\sp) -- (3*\sp, -\i*\sp);
        \draw[gray!30] (\i*\sp, 0) -- (\i*\sp, -3*\sp);
    }

    \foreach \i in {0,...,3} {
        \foreach \j in {0,...,3} {
            \node[grid point] at (g\i\j) {};
        }
    }

    \node[center point] at (g11) {};

    \node[stencil point] at (g01) {};
    \node[stencil point] at (g21) {};
    \node[stencil point] at (g10) {};
    \node[stencil point] at (g12) {};

    \foreach \j in {0,...,3} {
        \node[font=\footnotesize, above=2pt] at (\j*\sp, 0) {$\j$};
    }
    \foreach \i in {0,...,3} {
        \node[font=\footnotesize, left=2pt] at (0, -\i*\sp) {$\i$};
    }

    \draw[wrap arrow]
        ($(g23) + (0.15, 0)$)
        .. controls ($(g23) + (0.7, 0.3)$) and ($(g23) + (0.7, -1.0)$) ..
        ($(g23) + (0.4, -0.8)$)
        -- ($(g20) + (0.4, -0.8)$)
        .. controls ($(g20) + (-0.7, -0.8)$) and ($(g20) + (-0.7, 0.2)$) ..
        ($(g20) + (-0.15, 0)$);

    \draw[wrap arrow]
        ($(g31) + (0, -0.15)$)
        .. controls ($(g31) + (-0.3, -0.7)$) and ($(g31) + (-1.2, -0.7)$) ..
        ($(g31) + (-1.0, -0.4)$)
        -- ($(g01) + (-1.0, -0.4)$)
        .. controls ($(g01) + (-1.0, 0.5)$) and ($(g01) + (-0.2, 0.5)$) ..
        ($(g01) + (0, 0.15)$);

    \draw[{Stealth[length=4pt]}-{Stealth[length=4pt]}, thick, gray!60]
        (-0.4, 0.55) -- (3*\sp+0.4, 0.55);
    \node[font=\footnotesize, gray!60, above] at (1.5*\sp, 0.55) {dimension 1};

    \draw[{Stealth[length=4pt]}-{Stealth[length=4pt]}, thick, gray!60]
        (3*\sp+0.6, 0.2) -- (3*\sp+0.6, -3*\sp-0.2);
    \node[font=\footnotesize, gray!60, rotate=-90, above] at (3*\sp+0.6, -1.5*\sp) {dimension 2};

\end{scope}

\end{tikzpicture}
\caption{Periodic grids with dyadic (power of 2) point counts. (a) 1-dimensional grid on $N=8=2^3$ points arranged as a ring to showcase periodicity, with a nearest-neighbor stencil ($T=\{-1,0,1\}$) highlighted around center point $j=0$. The solid arrow illustrates the cyclic shift $S_{+1}$. (b) 2-dimensional grid on $N=4=2^2$ points per dimension, with nearest-neighbor stencil around center point $(1,1)$. Dashed arrows indicate the periodic wrap-around across grid boundaries. Connecting the boundaries results in a toroid.}
\label{fig: stencil}
\end{figure*}

\begin{definition}[Difference-of-Gaussian Stencil]\label{def:dogstencil}
Let $T \subset \mathbb{Z}^D$ be a finite, truncated symmetric set of spatial offsets. Elements $t \in T$ represent relative coordinate shifts from a center point. For instance, $t=0$ denotes the central pixel, while $t=(\pm 1, \dots, 0)$ denotes its immediate neighbors in the first dimension. We define the discrete DoG operator $A_h$ acting on the grid as

\begin{equation}
    A_h = \sum_{t \in T} c_t S_t,
\end{equation}
where $S_t \ket{j} = \ket{j + t \pmod N}$ is the cyclic spatial shift operator. The coefficients are a result of the difference of two discrete Gaussian mixtures
\begin{equation}
    c_t = p_t - q_t.
\end{equation}

The terms $p_t$ and $q_t$ represent the discrete sampling obtained by evaluating continuous Gaussian densities at the lattice points $t \in T$ and renormalizing so that each forms a valid probability distribution over $T$. Concretely,
\begin{equation}\label{eq:gaussian-kernels}
    p_t \;\propto\; \exp\!\bigl(-\tfrac{\|t\|^2}{2\sigma_p^2}\bigr),
    \qquad
    q_t \;\propto\; \exp\!\bigl(-\tfrac{\|t\|^2}{2\sigma_q^2}\bigr),
\end{equation}
with $\sigma_p < \sigma_q$. The narrow kernel $p$ (small variance $\sigma_p^2$) responds primarily to high-frequency spatial features, while the wide kernel $q$ (large variance $\sigma_q^2$) captures a smoothed local average. Their difference $c_t = p_t - q_t$ suppresses both low-frequency background and high-frequency noise, retaining only features at the intermediate spatial scale defined by $\sigma_p$ and $\sigma_q$, which is the characteristic behavior of a bandpass filter. The spectral analysis in Section \ref{spectral} provides a more detailed analysis of this filtering action.

\end{definition}

We enforce discrete normalization after the truncation such that $p_t, q_t \geq 0$ for all $t \in T$ and $\sum_{t \in T} p_t = \sum_{t \in T} q_t = 1$. Since $p_t$ and $q_t$ are both normalized probability distributions, the $1$-norm of the coefficient vector is naturally bounded by the triangle inequality $\sum_{t \in T} |c_t| \le 2$. This bound is significant for the LCU construction, as it directly dictates the subnormalization factor $\lambda=2$ required for the block-encoding identity derived in the subsequent section.

\section{Circuit Implementation}\label{sec:lcu_implementation}

The Linear Combination of Unitaries (LCU) framework \cite{childs2012hamiltonian} is used to design the circuit for the DoG operator,
\begin{equation}\label{eq: Ah-def}
    A_h = \sum_{t \in T} (p_t - q_t) S_t.
\end{equation}
Here, $\{p_t\}$, $\{q_t\}$, and $S_t$ are defined in Definition~\ref{def:dogstencil}. A direct implementation of Eq. \eqref{eq: Ah-def} typically requires preparing a state with signed amplitudes. For small kernels, we can adopt the state preparation method in \cite{plesch2011quantum}, which is modular with our down-the-line methods. However, we note that such arbitrary-state preparation methods are computationally expensive; specifically, \cite{plesch2011quantum} requires CNOT gates that scale exponentially with the number of qubits for preparing the state. Therefore, for larger stencils, we avoid this extensive computational cost by encoding the sign structure into a single auxiliary qubit. This allows us to inherit the state preparation proposed in \cite{kuklinski2025simpler} for loading the magnitude values up to a normalization factor. The coherent subtraction between $p$ and $q$ can then be realized by using a single Pauli-$Z$ gate on the indicator qubit, which flips the relative phase between the two Gaussian branches.

\subsection{Registers}\label{subsec:registers}

The DoG circuit comprises of an indicator register $(\mathcal{H}_{\text{ind}})$, a shift-label register ($\mathcal{H}_{\text{shift}}$), and a data register ($\mathcal{H}_{\text{data}}$). $\mathcal{H}_{\text{ind}}$ denotes a single qubit used to branch between the two Gaussian components $p$ and $q$, and $\mathcal{H}_{\text{shift}}$ consists of $s$ qubits, where $s= \lceil \log_2 |T| \rceil$. 
Finally, $\mathcal{H}_{\text{data}}$ is a register of $D \times n$ qubits encoding the grid state $\ket{j} = \bigotimes_{k=1}^D \ket{j_k}$ for $j_k \in \mathbb{Z}_N$ and $n = \log_2 N$.

The fundamental unitaries in the construction are the shift operators and the Gaussian loaders. The shift unitary $S_t$ acts on the data register as $S_t \ket{j} = \ket{j + t \pmod{N}}$, where the addition is performed component-wise in $\mathbb{Z}_N^D$. As per our earlier discussion regarding state preparation, we assume the existence of methods $G_p$ and $G_q$ that prepare the magnitude encoded quantum states
\begin{equation}\label{eq:gauss-loaders}
    G_\pi \ket{0}_{\mathrm{shift}}
    = \sum_{t \in T} \sqrt{\pi_t}\, \ket{t}_{\mathrm{shift}},
    \quad \pi \in \{p, q\}.
\end{equation}
These states, along with their adjoints $G_p^\dagger$ and $G_q^\dagger$, constitute the primary non-Clifford cost of the block encoding.

\subsection{Prepare and Shift}\label{subsec:prep-select}

The preparation unitary $P$ acts on the auxiliary space $\mathcal{H}_{\mathrm{ind}} \otimes \mathcal{H}_{\mathrm{shift}}$ and is defined by
\begin{equation}\label{eq:P-def}
    P = \bigl(
        \ketbra{0}{0}_{\mathrm{ind}} \otimes G_p
        + \ketbra{1}{1}_{\mathrm{ind}} \otimes G_q
    \bigr)\,
    (H_{\mathrm{ind}} \otimes I_{\mathrm{shift}}).
\end{equation}
A Hadamard gate on the indicator qubit creates an equal superposition of the two branches, after which the controlled Gaussian loaders prepare the corresponding amplitude distributions. Acting on the all-zeros ancilla state, this yields
\begin{align}\label{eq:P-action}
    P\ket{0,0}_{\mathrm{ind,shift}}
    = \frac{1}{\sqrt{2}}
    \Bigl(
        \ket{0}_{\mathrm{ind}} \sum_{t \in T} \sqrt{p_t}\, \ket{t}_{\mathrm{shift}}
        + \nonumber \\ \ket{1}_{\mathrm{ind}} \sum_{t \in T} \sqrt{q_t}\, \ket{t}_{\mathrm{shift}}
    \Bigr).
\end{align}

The select unitary acts on all three registers and applies the shift $S_t$ to the data register, controlled by the shift register
\begin{equation}\label{eq:SELECT-def}
    \mathrm{SEL}
    = \sum_{t \in T}
      I_{\mathrm{ind}} \otimes \ketbra{t}{t}_{\mathrm{shift}} \otimes S_t.
\end{equation}

\subsection{Block Encoding}\label{subsec:block-encoding}

The full block-encoding unitary, $U$ is
\begin{equation}\label{eq:U-def}
    U = (P^\dagger \otimes I_{\mathrm{data}})\;
        \mathrm{SEL}\;
        (Z_{\mathrm{ind}} \otimes I_{\mathrm{shift}} \otimes I_{\mathrm{data}})\;
        (P \otimes I_{\mathrm{data}}),
\end{equation}
where $Z_{\mathrm{ind}}$ is the Pauli-$Z$ gate on the indicator qubit. This gate induces a relative $\pi$-phase between the $\ket{0}_{\mathrm{ind}}$ and $\ket{1}_{\mathrm{ind}}$ branches, effectively converting the sum of two Gaussian components into their difference. We drop subscripts for the registers whenever they are obvious for compactness.

\begin{proposition}[DoG Block Encoding]\label{prop:block-encoding}
Let $A_h$ be the operator defined in equation \eqref{eq: Ah-def} with $\sum_{t} p_t = \sum_{t} q_t = 1$. Let $U$ be the unitary in \eqref{eq:U-def}. Assuming access to Gaussian state prepare routines ($G_p,\, G_q$ satisfying \eqref{eq:gauss-loaders}), $U$ is a $(\lambda, a, 0)$-block-encoding of $A_h$ with subnormalization $\lambda = 2$, $a = s+1$ ancilla qubits, and zero approximation error
\begin{equation}\label{eq:block-encoding-identity}
    (\bra{0,0} \otimes I_{\mathrm{data}})\, U\, (\ket{0,0} \otimes I_{\mathrm{data}})
    = \frac{1}{2}\, A_h.
\end{equation}
However, if the Gaussian loaders are implemented to finite precision $\epsilon_G$ (i.e., $\|(G_\pi - G_{\tilde{\pi}})\ket{0}\| \leq \epsilon_G$ for $\pi \in \{p,q\}$), they produce a block encoding, $\tilde{U}$ such that,
\begin{equation}
\tilde{U} = (\tilde{P}^\dagger \otimes I)\ \mathrm{SEL}\ (Z_{\mathrm{ind}} \otimes I \otimes I)\ (\tilde{P} \otimes I_{\mathrm{data}}), 
\end{equation}
where $\tilde{P}$ arises from $G_{\tilde{\pi}}$. $\tilde{U}$ is then a $(\lambda, a, \epsilon)$-block-encoding with $\epsilon \leq 2\epsilon_G$.

\end{proposition}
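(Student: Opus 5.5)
The exact identity \eqref{eq:block-encoding-identity} follows by computing matrix elements directly. Write $\ket{\Psi} := P\ket{0,0} = \tfrac{1}{\sqrt2}\bigl(\ket{0}\ket{g_p} + \ket{1}\ket{g_q}\bigr)$, with $\ket{g_\pi} = \sum_t \sqrt{\pi_t}\ket{t}$ as in \eqref{eq:P-action}. Then
\[
(\bra{0,0}\otimes I)U(\ket{0,0}\otimes I) = (\bra{\Psi}\otimes I)\,\mathrm{SEL}\,(Z_{\mathrm{ind}}\otimes I\otimes I)(\ket{\Psi}\otimes I),
\]
since $(\bra{0,0} P^\dagger) = \bra{\Psi}$. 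Applying $Z_{\mathrm{ind}}$ gives $\tfrac{1}{\sqrt2}(\ket{0}\ket{g_p} - \ket{1}\ket{g_q})\otimes I$. Because SEL acts as the identity on the indicator qubit, the cross terms $\bra{0}\ldots\ket{1}$ vanish by orthogonality of the indicator states. This leaves
\[
\tfrac12\Bigl(\sum_{t,t'}\sqrt{p_{t'}p_t}\,\braket{t'}{t}S_t - \sum_{t,t'}\sqrt{q_{t'}q_t}\,\braket{t'}{t}S_t\Bigr) = \tfrac12\sum_t (p_t-q_t)S_t = \tfrac12 A_h .
\]
Here the normalization $\sum p_t = \sum q_t = 1$ is used only implicitly, to guarantee that $G_p, G_q$ are valid unitaries producing unit vectors. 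The ancilla count is one indicator qubit plus $s$ shift qubits, so $a = s+1$. Exactness gives error $0$, and $\lambda = 2$ follows.

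For the approximate version, I would set $\ket{\tilde\Psi} = \tilde P\ket{0,0}$ and write the block as $(\bra{\tilde\Psi}\otimes I)\,W\,(\ket{\tilde\Psi}\otimes I)$, where $W = \mathrm{SEL}\,(Z\otimes I\otimes I)$ is unitary. The goal is to bound $\|2\,\cdot\,\text{block}(\tilde U) - A_h\|$, i.e.\ $\lambda$ times the difference of blocks. The key steps are:
\begin{enumerate}
\item Bound the state error: $\|\ket{\Psi}-\ket{\tilde\Psi}\| = \bigl(\tfrac12\|\ket{g_p}-\ket{\tilde g_p}\|^2 + \tfrac12\|\ket{g_q}-\ket{\tilde g_q}\|^2\bigr)^{1/2} \le \epsilon_G$, using orthogonality of the indicator branches.
\item Telescope:
\[
\bra{\tilde\Psi}W\ket{\tilde\Psi} - \bra{\Psi}W\ket{\Psi} = (\bra{\tilde\Psi}-\bra{\Psi})W\ket{\tilde\Psi} + \bra{\Psi}W(\ket{\tilde\Psi}-\ket{\Psi}).
\]
\item Bound each term in operator norm on the data register by $\|\ket{\tilde\Psi}-\ket{\Psi}\|\cdot\|W\|\cdot 1 \le \epsilon_G$. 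This uses the fact that $\|(\bra{\phi}\otimes I)X(\ket{\chi}\otimes I)\| \le \|\phi\|\,\|X\|\,\|\chi\|$.
\end{enumerate}
Together these give $\|\text{block}(\tilde U) - A_h/2\| \le 2\epsilon_G$.

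The main obstacle is the convention for $\epsilon$. Under the standard definition $\|A - \lambda\,\text{block}\| \le \epsilon$, the computation above yields $\lambda \cdot 2\epsilon_G = 4\epsilon_G$. To match the claimed $\epsilon \le 2\epsilon_G$, I would either interpret $\epsilon$ as the error in the unnormalized block, or sharpen step 1. The sharpening uses the fact that $\tfrac12$ already appears in the state difference, which equals $\tfrac{1}{\sqrt2}\sqrt{\epsilon_p^2+\epsilon_q^2}$. Alternatively, one can bound the two indicator branches separately: each branch contributes $\tfrac12\cdot 2\epsilon_G$ to the unnormalized block, so after multiplying by $\lambda = 2$ the total is $\le 2\epsilon_G$ by the triangle inequality over the $p$ and $q$ branches. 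The branch-wise bound is the route I would use to obtain the stated constant.
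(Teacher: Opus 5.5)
Your exact computation and your steps 1--3 are the paper's own argument. The paper telescopes $(\bra{00}\otimes I)(\tilde U-U)(\ket{00}\otimes I)$ into the same two terms, bounds $\|(\tilde P-P)\ket{00}\|\le\epsilon_G$ via the orthogonal indicator branches, uses unitarity of $\mathrm{SEL}$, and stops at $2\epsilon_G$. That is a bound on the \emph{unnormalized} block, $\|(\bra{00}\otimes I)\tilde U(\ket{00}\otimes I)-\tfrac{1}{2}A_h\|\le 2\epsilon_G$, and the paper simply calls this quantity $\epsilon$. So your first option (read $\epsilon$ as the error of the unnormalized block) is exactly the paper's implicit convention, and under it your proof is complete.

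The route you say you would actually use is wrong, both in its arithmetic and in its target. That route is the branch-wise ``sharpening'' meant to give $\|A_h-2(\bra{00}\otimes I)\tilde U(\ket{00}\otimes I)\|\le 2\epsilon_G$.

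\emph{Arithmetic.} Since $\mathrm{SEL}$ is diagonal in the shift label, the block of $\tilde U$ is $\tfrac{1}{2}\sum_t(\tilde p_t-\tilde q_t)S_t$ with $\tilde\pi_t=|\bra{t}\tilde G_\pi\ket{0}|^2$. Each branch obeys $\|\sum_t(\tilde\pi_t-\pi_t)S_t\|\le 2\epsilon_G$. That is $\epsilon_G$ per branch in the unnormalized block, $2\epsilon_G$ in total, hence $4\epsilon_G$ after multiplying by $\lambda=2$. Your sentence ``after multiplying by $\lambda=2$ the total is $\le 2\epsilon_G$'' drops a factor of $2$. The other proposed sharpening gains nothing either, since $\tfrac{1}{\sqrt{2}}\sqrt{\epsilon_p^2+\epsilon_q^2}\le\epsilon_G$ is already what step 1 uses.

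\emph{Target.} No argument can do better, because $4\epsilon_G$ is tight to first order. Take $T=\{-1,0,1\}$ and $p=(\tfrac14,\tfrac12,\tfrac14)$, a truncated Gaussian with $\sigma_p^2=1/(2\ln 2)$. Let $\tilde G_p\ket{0}$ be the renormalization of $\sqrt{p}+e$ with $e_t=\eta(-1)^t\sqrt{p_t}$. Then $\|e\|=\eta$ and $\sum_t(\tilde p_t-p_t)(-1)^t=2\eta+O(\eta^2)$. Now perturb $\sqrt{q}$ by norm $\eta$ in the opposite pattern, projected orthogonally to $\sqrt{q}$. This gives $\sum_t(\tilde q_t-q_t)(-1)^t\approx -2\eta\sqrt{1-m^2}$, where $m=\sum_t q_t(-1)^t\in(-\tfrac13,0)$ for any wider Gaussian $q$ on this stencil. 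At the Fourier mode $\omega=N/2$, $S_t$ has eigenvalue $(-1)^t$. There the error operator $A_h-2(\bra{00}\otimes I)\tilde U(\ket{00}\otimes I)$ has an eigenvalue of modulus at least about $(2+2\sqrt{8/9})\,\epsilon_G\approx 3.9\,\epsilon_G>2\epsilon_G$.

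So under the standard convention the correct constant is $4\epsilon_G$. The stated $\epsilon\le 2\epsilon_G$ holds only in the unnormalized sense, and you should adopt that reading rather than attempt the sharpening.
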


\begin{proof}
Fix an arbitrary data state $\ket{v} \in \mathcal{H}_{\mathrm{data}}$ and consider the initial state $\ket{0}_{\mathrm{ind}}\ket{0}_{\mathrm{shift}}\ket{v}_{\mathrm{data}}$.

\textit{(PREP)}
Applying $P \otimes I_{\mathrm{data}}$ yields
\begin{equation}\label{eq:proof-step1}
    (P \otimes I)\ket{0,0}\ket{v}
    = \frac{1}{\sqrt{2}}
    \Bigl(
        \ket{0}\sum_{t \in T} \sqrt{p_t}\,\ket{t}
        + \ket{1}\sum_{t \in T} \sqrt{q_t}\,\ket{t}
    \Bigr)\ket{v}.
\end{equation}

\textit{(Phase flip)}
The $Z$ gate on the indicator qubit maps $\ket{0} \mapsto \ket{0}$ and $\ket{1} \mapsto -\ket{1}$, giving 
\begin{align}\label{eq:proof-step2}
    (Z_{\mathrm{ind}} \otimes I \otimes I)\,(P \otimes I)\ket{0,0}\ket{v}
    = \nonumber \\ \frac{1}{\sqrt{2}}
    \Bigl(
        \ket{0}\sum_{t} \sqrt{p_t}\,\ket{t}
        - \ket{1}\sum_{t} \sqrt{q_t}\,\ket{t}
    \Bigr)\ket{v}.
\end{align}

\textit{(SELECT)}
The select unitary maps $\ket{t}_{\mathrm{shift}}\ket{v} \mapsto \ket{t}_{\mathrm{shift}} S_t\ket{v}$, producing 
\begin{align}\label{eq:proof-step3}
    \mathrm{SEL}\,(Z \otimes I \otimes I)(P \otimes I)\ket{0,0}\ket{v}
    = \nonumber\\ \frac{1}{\sqrt{2}}
    \Bigl(
        \ket{0}\sum_{t} \sqrt{p_t}\,\ket{t}\,S_t\ket{v}
        - \ket{1}\sum_{t} \sqrt{q_t}\,\ket{t}\,S_t\ket{v}
    \Bigr).
\end{align}

\textit{(Uncompute \& Project)}
We apply $P^\dagger \otimes I_{\mathrm{data}}$ and project onto $\bra{0}_{\mathrm{ind}}\bra{0}_{\mathrm{shift}}$. Decomposing $P^\dagger$ as
\begin{equation}\label{eq:Pdagger-decomp}
    P^\dagger
    = (H_{\mathrm{ind}} \otimes I_{\mathrm{shift}})\,
      \bigl(
          \ketbra{0}{0}_{\mathrm{ind}} \otimes G_p^\dagger
          + \ketbra{1}{1}_{\mathrm{ind}} \otimes G_q^\dagger
      \bigr),
\end{equation}
we first apply the controlled inverses. On the $\ket{0}_{\mathrm{ind}}$ branch, $G_p^\dagger$ acts on the shift register, producing the overlap $\bra{0}_{\mathrm{shift}} G_p^\dagger \ket{t}_{\mathrm{shift}} = \sqrt{p_t}$. Similarly, on the $\ket{1}_{\mathrm{ind}}$ branch, $G_q^\dagger$ produces $\bra{0}_{\mathrm{shift}} G_q^\dagger \ket{t}_{\mathrm{shift}} = \sqrt{q_t}$. After projecting the shift register onto $\bra{0}_{\mathrm{shift}}$, the state on the indicator qubit (for each shift component $t$) is proportional to
\[
    \sqrt{p_t}\,\ket{0}_{\mathrm{ind}}\,(\sqrt{p_t}\, S_t\ket{v})
    \;-\;
    \sqrt{q_t}\,\ket{1}_{\mathrm{ind}}\,(\sqrt{q_t}\, S_t\ket{v}).
\]
The Hadamard gate on the indicator qubit then maps $\ket{0} \mapsto \tfrac{1}{\sqrt{2}}(\ket{0}+\ket{1})$ and $\ket{1} \mapsto \tfrac{1}{\sqrt{2}}(\ket{0}-\ket{1})$. Projecting onto $\bra{0}_{\mathrm{ind}}$ 
yields 
\begin{align}\label{eq:proof-step4-borne}
    (\bra{0,0} \otimes I)\, U\, (\ket{0,0} \otimes I)\ket{v}
    &= \frac{1}{2} \sum_{t \in T}
       \bigl(p_t - q_t\bigr)\, S_t\ket{v} \nonumber\\
    &= \frac{1}{2}\, A_h\ket{v},
\end{align}
establishing \eqref{eq:block-encoding-identity} with subnormalization factor $\lambda = 2$.

For the finite-precision bound, let $\tilde{P}$ and $\tilde{U}$ denote the preparation unitary and full circuit constructed from approximate loaders $\tilde{G}_p, \tilde{G}_q$. We bound the block-encoding error via the triangle inequality
\begin{align}\label{eq:error-bound}
   & \bigl\|(\bra{00}\otimes I)(\tilde{U}-U)(\ket{00}\otimes I)\bigr\|
    \leq \nonumber \\ & \bigl\|(\bra{00}(\tilde{P}^\dagger - P^\dagger)\otimes I)\,\mathrm{SEL}\,\tilde{\Psi}\bigr\| \nonumber\\
   \quad &+ \bigl\|(\bra{00} P^\dagger\otimes I)\,\mathrm{SEL}\,(\tilde{\Psi}-\Psi)\bigr\|,
\end{align}
where $\Psi = (Z\otimes I)(P\otimes I)\ket{00}\ket{v}$ and $\tilde{\Psi}$ is defined analogously with $\tilde{P}$. For the second term, $\|\tilde{\Psi}-\Psi\| = \|(\tilde{P}-P)\ket{00}\|\leq \epsilon_G$, since expanding via~\eqref{eq:P-def} gives
\[
    (\tilde{P}-P)\ket{00} = \frac{1}{\sqrt{2}}\bigl(\ket{0}\otimes(\tilde{G}_p - G_p)\ket{0} + \ket{1}\otimes(\tilde{G}_q - G_q)\ket{0}\bigr),
\]
whose norm satisfies $\frac{1}{2}(\epsilon_G^2 + \epsilon_G^2) \leq \epsilon_G^2$. For the first term, $\bra{00}(\tilde{P}^\dagger - P^\dagger) = \bigl((\tilde{P}-P)\ket{00}\bigr)^\dagger$, so its operator norm is also bounded by $\epsilon_G$. Since $\mathrm{SEL}$ is unitary, both terms contribute at most $\epsilon_G$, yielding $\epsilon \leq 2\epsilon_G$.
\end{proof}

The overall quantum circuit is illustrated in Figure~\ref{fig:total-block}.

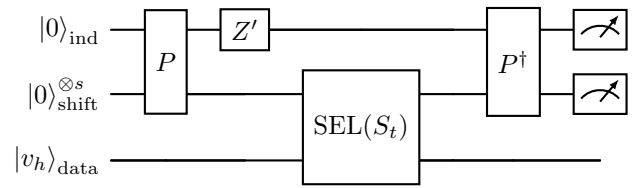
\begin{figure}[ht]
  \centering
  \begin{quantikz}[row sep=0.25cm, column sep=0.45cm]
    \lstick{$\ket{0}_{\mathrm{ind}}$}
      & \gate[wires=2]{P}
      & \gate{Z'}
      & \qw
      & \qw
      & \gate[wires=2]{P^\dagger}
      & \meter{} \\
    \lstick{$\ket{0}^{\otimes s}_{\mathrm{shift}}$}
      &
      & \qw
      & \gate[wires=2]{\mathrm{SEL}(S_t)}
      &
      &
      & \meter{} \\
    \lstick{$\ket{v_h}_{\mathrm{data}}$}
      & \qw
      & \qw
      &
      & \qw
      & \qw
      & \qw
  \end{quantikz}
  \caption{DoG block-encoding circuit. Postselecting the indicator and shift-label registers on $\ket{0}_{\mathrm{ind}}\ket{0}^{\otimes s}_{\mathrm{shift}}$ implements $\widetilde{A}_h = A_h/2$ on the data register. Here, $Z'$ denotes the $\mathrm{Pauli}-Z$ gate applied on the indicator register to produce the phase shift between the two branches.}
  \label{fig:total-block}
\end{figure}

\subsection{Resource Estimates}
As evident from Figure \ref{fig:total-block} and Section \ref{sec:lcu_implementation}, the block encoding circuit $U$ consists of three Clifford gates (two Hadamards and a Pauli-$Z$), the state preparation block $P$, and the controlled shift operators $S_t$. We consider the non-Clifford cost contained within each of the Gaussian loaders $G_p$ and $G_q$ (alongside their adjoints) in $P$, and the explicit circuit needed to realize the controlled shift unitaries in $\mathrm{SEL}$.

Each Gaussian loader acts on $s=\lceil log_2(T) \rceil$ qubits to prepare the amplitude encoded state, which, by the use of standard state preparation method in \cite{plesch2011quantum}, allows implementation with a cost of $\mathcal{O}(2^s)$ rotations with no required ancilla qubits. This method is fully compatible with our proposed block-encoding circuit, and the cost is acceptable for smaller Gaussian kernels used in image processing and edge-detection applications. In cases where the kernel size is significantly large as to make the cost scaling of $\mathcal{O}(2^s)$ prohibitive, the Gaussian structure-aware method in \cite{kuklinski2025simpler} reduces the cost to $\mathcal{O}(s^2) $ rotations using $\lfloor (s-1)/2\rfloor$ additional ancillary qubits. However, we observe that this method, in turn, depends on the ``post-selection-by-measurement" method, which introduces an overall multiplicative constant on our block-encoding subnormalization factor $\lambda$. Specifically, the ancilla introduces a subnormalization, $\gamma<1$, within the loader $P$, which results in our overall block encoding to have an effective subnormalization of $\lambda_{\mathrm{eff}} = 2/\gamma$. This new $\lambda_{\mathrm{eff}}$ has an adverse effect on our final post-selection probability, and therefore, trade-offs between circuit cost and post-selection probability must be considered, given the stencil size used in the application. 

Next, we observe that the $\mathrm{SEL}$ block applies a $D$-dimensional shift $S_t$ via $D$ constant addition circuits on $n = log_2(N)$ qubits, which has a cost of $\mathcal{O}(n)$ Toffoli gates per addition \cite{draper2004logarithmic} for a total of $\mathcal{O}(|T| \cdot D \cdot \log N)$ T-gates. Each rotation can be synthesized to precision $\epsilon_G$ at a cost of $\mathcal{O}(\log(1/\epsilon_G))$ T-gates \cite{bocharov2015efficient}. We finally consider all the non-Clifford costs to arrive at $\mathcal{O}(2^s \log(1/\epsilon_G) + |T| \cdot D \cdot \log N)$ T-gates with subnormalization $\lambda = 2$ for the structure-agnostic loaders, or $\mathcal{O}(s^2 \log(1/\epsilon_G) + |T| \cdot D \cdot \log N)$ with $\lambda_{\mathrm{eff}} = 2/\gamma$ when adopting the structure-exploiting prepare methods.

\section{Spectral Characterization of the DoG Operator}\label{spectral}

The proposed block encoding in Section \ref{sec:lcu_implementation} can be implemented 
as the projection action on a linear combination of shift unitaries. In this section, we derive the spectral characterization of the DoG operator, which provides the necessary tools to analyze the operator's norm, Hermicity, and the encoding's success probability. Since the spatial domain is periodic, the discrete shift operators $S_t$ are circulant on $\mathbb{Z}_N^D$ and are therefore diagonalized by the $D$-dimensional discrete Fourier basis $\{\ket{\omega}\}_{\omega \in \mathbb{Z}_N^D}$, where

\[
\ket{\omega}
=
\frac{1}{\sqrt{N^D}}
\sum_{j\in\mathbb{Z}_N^D}
e^{2\pi i \langle \omega, j\rangle / N}\ket{j}.
\]

\begin{theorem}[Operator Norm of the Periodic DoG Operator] \label{theorem: spectral}
Let
\[
A_h = \sum_{t \in T} (p_t - q_t) S_t
\]
be the periodic Difference-of-Gaussian operator on $\mathbb{Z}_N^D$ and let $\widehat{p}(\omega)$ and $\widehat{q}(\omega)$ denote the discrete Fourier transforms of the coefficient sequences $\{p_t\}$ and $\{q_t\}$:
\[
\widehat{p}(\omega) = \sum_{t \in T} p_t e^{-2\pi i \langle \omega, t \rangle / N},
\qquad
\widehat{q}(\omega) = \sum_{t \in T} q_t e^{-2\pi i \langle \omega, t \rangle / N}.
\]
Then $A_h$ is diagonal in the discrete Fourier basis, with eigenvalues
\[
\mu(\omega)
=
\widehat{p}(\omega)
-
\widehat{q}(\omega).
\]
Furthermore, if the kernels are symmetric in the sense that $p_t=p_{-t}$ and $q_t=q_{-t}$ (equivalently, $c_t:=p_t-q_t$ satisfies $c_t=c_{-t}$), then $A_h$ is Hermitian and the eigenvalues $\mu(\omega)$ are real. Its operator norm satisfies
\[
\|A_h\|
=
\max_{\omega \in \mathbb{Z}_N^D}
\left|
\widehat{p}(\omega)
-
\widehat{q}(\omega)
\right|
=
\max_{\omega \in \mathbb{Z}_N^D}
|\mu(\omega)|.
\]
\label{frequency domain}
\end{theorem}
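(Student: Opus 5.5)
The plan is to compute the action of each shift $S_t$ on a Fourier basis vector $\ket{\omega}$ directly, then extend by linearity to $A_h$. Using $S_t\ket{j} = \ket{j+t \pmod N}$, I would write
\[
S_t\ket{\omega} = \frac{1}{\sqrt{N^D}}\sum_{j\in\mathbb{Z}_N^D} e^{2\pi i\langle \omega, j\rangle/N}\ket{j+t}
\]
and reindex with $k = j+t \pmod N$. Because $\omega \in \mathbb{Z}_N^D$, the phase $e^{2\pi i\langle\omega,\cdot\rangle/N}$ is well defined modulo $N$, so the reindexing is a bijection of $\mathbb{Z}_N^D$ that respects the phase. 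This gives
\[
S_t\ket{\omega} = e^{-2\pi i\langle\omega,t\rangle/N}\ket{\omega}.
\]
This holds for every $t\in T$, even when $t$ exceeds the grid and wraps around, or when two offsets coincide modulo $N$. Summing with coefficients $c_t = p_t - q_t$ then yields $A_h\ket{\omega} = \bigl(\widehat p(\omega)-\widehat q(\omega)\bigr)\ket{\omega} = \mu(\omega)\ket{\omega}$, which proves the first claim. The sign convention in $\widehat p$ and $\widehat q$ matches exactly the phase produced by the shift direction.

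Next I would record that $\{\ket{\omega}\}_{\omega\in\mathbb{Z}_N^D}$ is an orthonormal basis. It is the $D$-fold tensor product of the one-dimensional DFT basis, and orthonormality follows from the standard geometric-sum identity $\sum_{j\in\mathbb{Z}_N} e^{2\pi i (\omega-\omega')j/N} = N\delta_{\omega\omega'}$. Hence $A_h = F\,\mathrm{diag}(\mu)\,F^\dagger$ with $F$ unitary, so $A_h$ is unitarily diagonalizable and in particular normal.

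For the Hermitian claim, I would use the earlier assumption that $T$ is symmetric and that $c_t$ is real. Since $S_t^\dagger = S_{-t}$, we get $A_h^\dagger = \sum_{t\in T} c_t S_{-t} = \sum_{t\in T} c_{-t} S_t = A_h$, using $c_{-t}=c_t$. As a consistency check, pairing $t$ with $-t$ gives $\mu(\omega) = \sum_t c_t\cos(2\pi\langle\omega,t\rangle/N)$, which is manifestly real.

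Finally, the norm identity follows from unitary invariance of the operator norm. We have $\|A_h\| = \|\mathrm{diag}(\mu)\| = \max_\omega|\mu(\omega)|$, and this holds even without the symmetry assumption, by normality alone.

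The only step that needs care is the reindexing in the first paragraph: one must verify that periodicity makes the eigenphase independent of the representative of $t$ modulo $N$. I would handle this by noting that $e^{2\pi i\langle\omega,t\rangle/N}$ depends only on $t \bmod N$. Everything else is routine.
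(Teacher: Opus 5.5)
Your proposal is correct and follows the paper's proof. Both compute the eigenphase $S_t\ket{\omega}=e^{-2\pi i\langle\omega,t\rangle/N}\ket{\omega}$, extend by linearity to $\mu(\omega)=\widehat p(\omega)-\widehat q(\omega)$, and get Hermiticity from $S_t^\dagger=S_{-t}$ together with $c_{-t}=c_t\in\mathbb{R}$. The one small difference is in how normality is obtained: the paper derives it (and hence $\|A_h\|=\max_\omega|\mu(\omega)|$) from Hermiticity, whereas you derive it directly from the orthonormal Fourier eigenbasis, which correctly shows that the norm identity holds even without the symmetry assumption.
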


\begin{proof}
Applying the spatial shift operator $S_t$ to a Fourier basis state $\ket{\omega}$ yields
\[
S_t \ket{\omega}
=
e^{-2\pi i \langle \omega, t \rangle / N}
\ket{\omega}.
\]
Thus, the action of the full operator evaluates to
\begin{align}
A_h \ket{\omega}
=
\left(
\sum_{t \in T} (p_t - q_t)
e^{-2\pi i \langle \omega, t \rangle / N}
\right)
\ket{\omega}
\\=
\bigl(\widehat{p}(\omega)-\widehat{q}(\omega)\bigr)\ket{\omega},
\end{align}
so the Fourier basis diagonalizes $A_h$ with eigenvalues $\mu(\omega)=\widehat{p}(\omega)-\widehat{q}(\omega)$.
If $c_t=c_{-t}\in\mathbb{R}$, then using $S_t^\dagger=S_{-t}$ we have
\[
A_h^\dagger
=
\sum_{t\in T} c_t S_t^\dagger
=
\sum_{t\in T} c_t S_{-t}
=
\sum_{t\in T} c_{-t} S_t
=
A_h,
\]
so $A_h$ is Hermitian and hence normal. Therefore, its operator norm equals the maximum absolute eigenvalue, giving the stated result.
\end{proof}

\subsection{Algorithmic Consequences}

Theorem \ref{frequency domain} makes the spectrum of the periodic DoG operator explicit and provides an exact characterization of its operator norm. Together with our explicit block encoding, this enables direct use of $A_h$ in eigenvalue-based quantum primitives and clarifies its action as a bandpass filter in the Fourier basis.

\subsubsection{Direct Hamiltonian Simulation and Eigenvalue Transformations}
Hamiltonian simulation and eigenvalue-based transformations require Hermitian generators. For a general non-Hermitian operator $A$, one commonly proceeds via a Hermitian dilation (e.g., $\begin{pmatrix} 0 & A \\ A^\dagger & 0 \end{pmatrix}$), which increases dimension and introduces additional overhead. In contrast, under the symmetry condition $c_t=c_{-t}\in\mathbb{R}$, the DoG stencil $A_h$ is Hermitian with real spectrum. Consequently, given a block encoding of $A_h$, one may apply qubitization/quantum signal processing (QSP)-style eigenvalue transformations directly to implement functions of $A_h$, including time-evolution operators $e^{iA_h t}$, without first forming a dilation.

\subsubsection{Bandpass Structure in the Fourier Basis}

The eigenvalue response $\mu(\omega) = \widehat{p}(\omega) - \widehat{q}(\omega)$ characterizes the exact frequency-domain action of $A_h$. Normalization of the kernels implies $\widehat{p}(0) = \widehat{q}(0) = 1$, so $\mu(0) = 0$: the operator annihilates the DC component (response at $\omega = 0$). For Gaussian-like kernels, $\widehat{p}(\omega)$ and $\widehat{q}(\omega)$ are suppressed at high frequencies, so $\mu(\omega)$ is concentrated on an intermediate band controlled by $\sigma_p$ and $\sigma_q$. In the following section, we use this eigenvalue decomposition to derive an exact expression for the block-encoding success probability and to establish its asymptotic scaling.

\section{Success Probability Analysis}\label{sec: success_prob}

In this section, we describe the success probability of our proposed method from two different perspectives. We start by deriving an expression for any input state at a given, fixed grid resolution, quantifying the filtering action of the DoG operator. Next, we establish that the success probability scales as $O(h^4)$ in the asymptotic regime where the grid spacing $h \to 0$ for a smooth function.

\subsection{Closed form Success Probability} \label{subsec: closed_suc_prob}

The following proposition provides a closed-form representation of the success probability of the Block Encoding and follows directly from the spectral analysis of the DoG operator described in Theorem \ref{theorem: spectral} of Section \ref{spectral}. 

\begin{proposition}[Success Probability: Closed-Form]\label{prop: suc_prob_parseval} Let $ \ket{v_h} \in \mathcal{H}_{data}$ be an arbitrary quantum state with Fourier coefficients $\hat{v}_h(\omega) = \braket{\omega}{v_h}$ where $\ket{\omega}$ is the Fourier basis state. Then, the exact success probability of the proposed block encoding U in Proposition \ref{prop:block-encoding} is 

\begin{equation}\label{eq:parseval-success}
    P_{\mathrm{success}}
    = \left\| \frac{1}{2} A_h \ket{v_h} \right\|^2
    = \frac{1}{4} \sum_{\omega \in \mathbb{Z}_N^D}
      |\mu(\omega)|^2 \, |\hat{v}_h(\omega)|^2,
\end{equation} 
where $\mu(\omega) = \hat{p}(\omega) - \hat{q}(\omega)$ are the eigenvalues of $A_h$ given by Theorem \ref{theorem: spectral}.
\end{proposition}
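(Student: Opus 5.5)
The plan has two parts: first identify the success probability with a squared norm, then evaluate that norm in the Fourier basis using Theorem~\ref{theorem: spectral}.

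\emph{First equality.} Take the input $\ket{0,0}_{\mathrm{ind,shift}}\ket{v_h}$ and apply $U$. Decompose the output as
\[
U\ket{0,0}\ket{v_h} = \ket{0,0}\otimes \bigl((\bra{0,0}\otimes I)U(\ket{0,0}\otimes I)\ket{v_h}\bigr) + \ket{\perp},
\]
where $\ket{\perp}$ is supported on ancilla states orthogonal to $\ket{0,0}$. By Proposition~\ref{prop:block-encoding}, the first component equals $\ket{0,0}\otimes \tfrac{1}{2}A_h\ket{v_h}$. Success means measuring both the indicator and shift registers in $\ket{0,0}$. By the Born rule, the probability of this outcome is the squared norm of the unnormalized projected branch, which gives $P_{\mathrm{success}} = \|\tfrac12 A_h\ket{v_h}\|^2$.

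\emph{Second equality.} Expand the input in the Fourier basis, $\ket{v_h}=\sum_\omega \hat v_h(\omega)\ket{\omega}$. This basis is orthonormal because the DFT matrix is unitary. By Theorem~\ref{theorem: spectral}, each $\ket{\omega}$ is an eigenvector of $A_h$ with eigenvalue $\mu(\omega)$, so $A_h\ket{v_h}=\sum_\omega \mu(\omega)\hat v_h(\omega)\ket{\omega}$. Parseval's identity in this orthonormal basis then gives $\|A_h\ket{v_h}\|^2=\sum_\omega |\mu(\omega)|^2|\hat v_h(\omega)|^2$. Multiplying by $1/4$ yields Eq.~\eqref{eq:parseval-success}.

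This argument uses only the diagonalization part of Theorem~\ref{theorem: spectral}, not the symmetry or Hermiticity assumption, so the formula holds even for complex $\mu(\omega)$.

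I expect no step to be a real obstacle. The only point needing care is that the postselection is on the ancilla registers only, so the data register is left unmeasured and we take the full norm of the projected vector. A secondary point is the sign convention in $S_t\ket{\omega}$, but this does not matter because only $|\mu(\omega)|^2$ appears.
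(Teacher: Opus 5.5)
Your proposal is correct and follows the same route as the paper. You apply Born's rule to the postselected branch $\tfrac12 A_h\ket{v_h}$ coming from the block-encoding identity, then expand $\ket{v_h}$ in the Fourier eigenbasis of Theorem~\ref{theorem: spectral} and use orthonormality (Parseval). Your version is slightly cleaner than the paper's, which writes the cross terms without complex conjugates, and you are right that only the diagonalization is needed, not Hermiticity.
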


\begin{proof}
The first equality follows directly from applying Born's rule on Equation \eqref{eq:proof-step4-borne}. We showed that $A_h$ is diagonal in the Fourier basis in Theorem \ref{theorem: spectral}, which implies the action of $A_h$ in the Fourier Basis is $A_h \ket{\omega} = \mu(\omega)\ket{\omega}$. Expanding $\ket{v_h} = \sum_{\omega} \hat{v}_h(\omega)\ket{\omega}$ gives
\begin{align}
    \frac{1}{\lambda} A_h \ket{v_h}
    &= \frac{1}{2} \sum_{\omega \in \mathbb{Z}_N^D} A_{h} \hat{v}_h(\omega)\ket{\omega}\\
    &= \frac{1}{2} \sum_{\omega \in \mathbb{Z}_N^D}
      \mu(\omega)\, \hat{v}_h(\omega) \ket{\omega}.
\end{align}
By Parseval's theorem (equivalently, unitarity of the discrete Fourier transform) and the orthonormality of the Fourier basis,
\begin{align}
    P_{\mathrm{success}}
    = \frac{1}{4} \sum_{w, w^{*} \in \mathbb{Z}_N^D} \mu(\omega)\mu(\omega^*)\hat{v_h}(\omega)\hat{v_h}(\omega^*)\braket{\omega}{\omega^*}\\
    = \frac{1}{4} \sum_{\omega \in \mathbb{Z}_N^D}
      |\mu(\omega)|^2 \, |\hat{v}_h(\omega)|^2
\end{align}  
where $\omega^*$ is the complex conjugate of $\omega$. \qedhere
\end{proof}

Equation \eqref{eq:parseval-success} has a direct filtering interpretation. It shows that the success probability is one-quarter of the weighted average of the squared transfer function, $|\mu(\omega)|^2$, with weights derived from $|\hat{v}_h(\omega)|^2$, which is the power spectral density of the input signal. Normalization of $\ket{v_h}$ guarantees $\sum_\omega |\hat{v}_h(\omega)|^2 = 1$, and therefore
\begin{equation}\label{eq:success-bound-spectral}
    P_{\mathrm{success}}
    \leq \frac{1}{4} \max_{\omega} |\mu(\omega)|^2
    = \frac{1}{4} \| A_h \|^2.
\end{equation}
where the first inequality follows from the weighted average bound, and finally the third expression is equal as a direct consequence of $A_h$ being Hermitian, as proven in Theorem \ref{theorem: spectral}. Since both $p$ and $q$ are normalized, the response at $\omega = 0$ vanishes ($\mu(0) = 0$). Gaussian-like kernels are essentially band-pass filters, which means that $|\mu(\omega)|$ is also suppressed at high frequencies, so the success probability is concentrated on input energy lying within the DoG passband. In particular, the success probability is high when the signal has significant spectral energy in the intermediate frequency range defined by $\sigma_p$ and $\sigma_q$, and low when the signal is concentrated at DC or at frequencies beyond the filter's response.

\subsection{Asymptotic Scaling for Smooth Functions} \label{subsec:asymp_smooth}
For the case in which the input is taken by sampling from a progressively refined grid, $h \to 0$, the asymptotic success probability grows in the order of $O(h^4)$. This asymptotic characterization connects the DoG operator to the continuous Laplacian. We formalize the observation by the following theorem.

\begin{theorem}[Asymptotic Success Probability]\label{theorem: success-asymptotic}
Let $v \in \mathcal{C}_{per}^4(\Omega_D)$, the space of four-times continuously differentiable functions on $\Omega_D$, and let $\ket{v_h}$ be the normalized quantum state obtained by sampling $v$ on the periodic grid $\mathbb{Z}_N^D$ as in Definition \ref{def: grid}. Then
\begin{equation}\label{eq:success-asymptotic}
    P_{\mathrm{success}}
    = \frac{C_{\mathrm{DoG}}^2}{4D^2}\, h^4\,
      \frac{\|\Delta v\|_{L^2(\Omega_D)}^2}
           {\|v\|_{L^2(\Omega_D)}^2}
      + \mathcal{O}(h^6),
\end{equation}
where $C_{\mathrm{DoG}} = \frac{1}{2} \sum_{t \in T} (p_t - q_t)\|t\|^2$, $\Delta$ denotes the continuous Laplacian on $\Omega_D$ and $\|.\|_{L^2(\Omega_D)}$ denotes the continuous $L^2$ norm defined in $D-dimensional$ $\Omega_D$. 
\end{theorem}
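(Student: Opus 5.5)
The plan is to work entirely in physical space with a Taylor expansion of the stencil, rather than through the spectral formula of Proposition~\ref{prop: suc_prob_parseval}. By Proposition~\ref{prop: suc_prob_parseval} (first equality) we have
\[
P_{\mathrm{success}}=\tfrac14\,\frac{\sum_{j}|(A_h v_h)(j)|^2}{\sum_j |v_h(j)|^2}.
\]
So it suffices to find a pointwise asymptotic expansion of $(A_h v_h)(j)=\sum_{t\in T}c_t\,v(hj+ht)$, and then to turn both sums into $L^2$ norms by Riemann-sum arguments. Here periodicity of $v$ lets us read $hj+ht$ modulo $1$, matching the cyclic shift $S_t$.

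\textbf{Step 1: Taylor expansion.} Fix $x=hj$ and expand $v(x+ht)$ to fourth order with a Lagrange/integral remainder. Since $v\in\mathcal C^4_{per}$ and $T$ is finite, the remainder is bounded by $h^4\max_t\|t\|^4\,\|D^4v\|_\infty$ uniformly in $j$. We then kill the low-order moments as follows.
\begin{itemize}
\item The zeroth moment vanishes because $\sum_t c_t=\sum_t p_t-\sum_t q_t=0$.
\item The first and third moments vanish because of the symmetry $c_t=c_{-t}$.
\item For the second moment, we use that the Gaussian weights depend only on $\|t\|$ and that the truncated set $T$ is invariant under coordinate permutations and sign flips (we make this isotropy assumption on $T$ explicit, e.g.\ $T$ a cube or ball). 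Then $\sum_t c_t t_it_k=\delta_{ik}\frac1D\sum_t c_t\|t\|^2=\delta_{ik}\frac{2C_{\mathrm{DoG}}}{D}$.
\end{itemize}
Consequently
\[
(A_hv_h)(j)=\frac{h^2}{2}\sum_{i,k}\Big(\sum_t c_tt_it_k\Big)\partial_i\partial_k v(x)+R_j
=\frac{C_{\mathrm{DoG}}}{D}h^2\,\Delta v(hj)+R_j,\qquad |R_j|\le Ch^4 .
\]

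\textbf{Step 2: Sum and normalize.} Squaring gives $|A_hv_h(j)|^2=\frac{C_{\mathrm{DoG}}^2}{D^2}h^4|\Delta v(hj)|^2+\mathcal O(h^6)$ uniformly in $j$; the cross term is $h^2\cdot h^4$ times bounded quantities. Multiply numerator and denominator by $h^D=N^{-D}$. The $\mathcal O(h^6)$ error, averaged over $N^D$ points, remains $\mathcal O(h^6)$. It then remains to show the Riemann/trapezoid sums satisfy
\[
h^D\sum_j|\Delta v(hj)|^2=\|\Delta v\|_{L^2}^2+\mathcal O(h^2)
\quad\text{and}\quad
h^D\sum_j|v(hj)|^2=\|v\|_{L^2}^2+\mathcal O(h^2).
\]
Dividing, and using $\|v\|_{L^2}\neq0$, yields $\frac{C_{\mathrm{DoG}}^2}{4D^2}h^4\|\Delta v\|^2/\|v\|^2+\mathcal O(h^6)$.

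\textbf{Main obstacle: the quadrature error in Step 2.} $|\Delta v|^2$ is only $\mathcal C^2$ when $v\in\mathcal C^4$. I would invoke the periodic Euler--Maclaurin formula: for periodic functions the equispaced rectangle rule has no boundary correction terms, so its error is controlled by the smoothness, giving $\mathcal O(h^2)$ for $\mathcal C^2_{per}$ integrands. Alternatively, I would bound it via Fourier aliasing, where the error equals the sum of Fourier coefficients at nonzero multiples of $N$. These decay like $|m|^{-2}$, but in dimension $D$ the lattice sum over $mN$ needs slightly more than $D$ orders of decay, so this route is only as strong as the available smoothness allows.

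If this turns out to be too weak, the fallback is to state the remainder as $\mathcal O(h^6)$ under the stronger hypothesis of sufficient smoothness (e.g.\ $v\in\mathcal C^{4+D}$), or as $o(h^4)$ under $\mathcal C^4$ alone. The isotropy of $T$ also has to be stated explicitly, since without it the second moment tensor is not a multiple of the identity and the Laplacian does not appear.
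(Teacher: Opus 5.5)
Your proposal is correct and follows the same route as the paper: a physical-space Taylor expansion of the stencil, cancellation of the zeroth and odd moments, an identity-proportional second-moment tensor giving $\frac{C_{\mathrm{DoG}}}{D}h^2\Delta v$, then Riemann sums for both norms. It is more careful than the paper in three places: you state the permutation/sign-flip invariance of $T$ that the paper's appeal to Gaussian isotropy tacitly needs, you correctly get the tensor as $\frac{2C_{\mathrm{DoG}}}{D}I$ (the paper writes $\frac{C_{\mathrm{DoG}}}{D}I$ and then drops the $\tfrac12$, two slips that cancel), and you actually quantify the quadrature error.

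Your fallback is not needed. For $f\in\mathcal{C}^2_{per}$ the periodic rectangle rule has error at most $\frac{h^2}{12}\sum_{i}\|\partial_i^2 f\|_\infty$ in any dimension (apply the one-dimensional trapezoid bound one coordinate at a time). With $f=|\Delta v|^2$ and $f=|v|^2$, the $\mathcal{O}(h^6)$ remainder therefore holds under the stated $\mathcal{C}^4$ hypothesis.
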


\begin{proof}

We expand the action of $A_h$ on $v$ using the Taylor expansion as a continuous point $x \in \Omega_D$, 
\begin{align}\label{eq:taylor}
    (A_h v)(x)  = \sum_{t \in T} (p_t - q_t)  \biggl( v(x) + 
    h\, t \cdot \nabla v(x) \nonumber \\ + \frac{h^2}{2}\, t^T H(v(x))\, t + \mathcal{O}(h^3)\biggr)
\end{align}
where $H(v(x))$ is the Hessian matrix.

The constant term vanishes due to normalization, $\sum_t (p_t - q_t) = 0$, the linear term, and all odd-order terms vanish by kernel symmetry ($p_t = p_{-t}$, $q_t = q_{-t}$). Therefore, the quadratic term is the first non-zero term. 
We know that Gaussian kernels are isotropic, and therefore, the weighted quadratic form is rotationally symmetric, so
\[
\sum_t (p_t-q_t)\, t t^\top = \frac{C_{\mathrm{DoG}}}{D}\, I.
\]
This shows that off-diagonal Hessian terms cancel and therefore, the quadratic Taylor term satisfies the following simplification
\begin{align*}
\frac{h^2}{2}\sum_t (p_t-q_t)\, t^\top \nabla^2 v(x)\, t
= \\
\frac{h^2}{2}\operatorname{tr}\!\left(\Big(\sum_t (p_t-q_t)\, t t^\top\Big)\nabla^2 v(x)\right) \\
=
\frac{C_{\mathrm{DoG}}}{D}\, h^2\, \Delta v(x),
\end{align*}
giving
\begin{equation}\label{eq:operator-approx}
    (A_h v)(x)
    = \frac{C_{\mathrm{DoG}}}{D}\, h^2\, \Delta v(x)
      + \mathcal{O}(h^4).
\end{equation}

We now pass from the discrete $\ell^2$ norm to the continuous $L^2$ norm via the Riemann sum equivalence $\|w_h\|_2^2 \approx h^{-D} \|w\|_{L^2(\Omega_D)}^2$ (\cite{sturm2025efficient}). Applying this to both the numerator and denominator of the success probability from Proposition~\ref{prop: suc_prob_parseval},
\begin{align}
    P_{\mathrm{success}}
    &= \frac{1}{4}\,
       \frac{\|A_h v_h\|_2^2}{\|v_h\|_2^2} =
       \frac{1}{4}\,
       \frac{h^{-D} \left\| \frac{C_{\mathrm{DoG}}}{D}\, h^2\, \Delta v \right\|_{L^2}^2}
            {h^{-D}\, \|v\|_{L^2}^2} + \mathcal{O}(h^6)
    \nonumber\\[4pt]
    &\approx \frac{C_{\mathrm{DoG}}^2}{4D^2}\, h^4\,
       \frac{\|\Delta v\|_{L^2(\Omega_D)}^2}
            {\|v\|_{L^2(\Omega_D)}^2},
\end{align}
where the $h^{-D}$ factors cancel, establishing the $\mathcal{O}(h^4)$ scaling.
\end{proof}

Theorem \ref{theorem: success-asymptotic} shows that our DoG block encoding, in the grid $h \to 0$ regime, has the same success-probability as the Laplacian block encoding in \cite{sturm2025efficient}. However, the block-encoding methods differ significantly. In the construction in \cite{sturm2025efficient}, the operator $L_{D,h}$ has eigenvalues of order $1/h^2$, consequently causing the subnormalization to be $\lambda_{D,\max} = 4D/h^2$, which causes the success-probability to scale as $\mathcal{O}(h^{4})$. In our construction, we design the operator $A_h$, which is a linear combination of shifts, have eigenvalues bounded by a constant ($\|A_h\| \leq 2$), so $\lambda = 2$ is independent of $h$ and instead cause the output amplitude $\|A_h \ket{v_h}\|$ to shrinks as $h^2$ for smooth inputs, producing the same net $\mathcal{O}(h^4)$ scaling upon squaring. We observe that this matching is not coincidental, but rather stems from a fundamental constraint for any operator that approximates a second-order derivative on an $h-$discretized grid. The $h^2$ factor must be accounted for either in subnormalization or in the output amplitude. We push this factor to the amplitude to achieve a constant sub-normalization and observe that this mathematical distinction between the two methods yields a significant advantage in the context of the DoG operator. This advantage does not manifest in the grid-refinement regime, as both methods exhibit the same scaling, but rather lies down the pipeline for eigenvalue transformation that follows the block encoding. Specifically, if we are performing Hamiltonian simulation via $e^{iA_h t}$, the effective simulation time scales as $\lambda t$, which means that our construction requires a time of $2t$ independent of the $h^2$ factor. In a broader perspective, QSVT-like spectral transformations of $A_h$, such as bandpass thresholding or selective amplification of modes within the DoG passband, polynomial degrees, depend only on the filter parameters $\sigma_p$, $\sigma_q$, and are independent of the grid spacing $h$. This is a direct consequence of our eigenvalues, $\mu(\omega)/\lambda$, occupying a fixed $h$-independent fraction of $[-1,1]$.

As for fixed-resolution filtering, which is the setting of our Proposition \ref{prop: suc_prob_parseval}, there is no grid refinement, and therefore, the success probability is governed directly by the signal's energy in the DoG passband.

\section{Numerical Example}\label{sec:numerical}
We illustrate our method with a simple 1-D example. We use the parameters: $D=1$, $N=16$ ($n=4$ qubits), $\sigma_p = 0.8$, $\sigma_q = 1.6$, and stencil $T = \{-3, \ldots, 3\}$ ($s = 3$ shift-label qubits).

Figure \ref{fig:freq_response} displays the two Gaussian distributions, the DoG coefficients, and the eigenvalue response $\mu(\omega) = \hat{p}(\omega) - \hat{q}(\omega)$ over the frequency modes $\omega = 0, \ldots, N/2$. The transfer function confirms the bandpass structure, predicted by Theorem \ref{theorem: spectral}, $\mu(0) = 0$, a peak at intermediate frequencies, and suppression at high frequencies.

Figure \ref{fig:success_prob} compares the exact success probability from Proposition \ref{prop: suc_prob_parseval} with the asymptotic formula of Theorem \ref{theorem: success-asymptotic} for the input $v(x) = \sin(2\pi x)$ across grid sizes $N = 2^4$ through $2^{10}$. Both curves follow the $N=h^{-1}, ~\mathcal{O}(h^{4})$ scaling, and their ratio converges to $1$ as the grid is refined, confirming that the asymptotic approximation is
consistent with the exact Parseval-based expression.

The narrow Gaussian $p$ has a higher concentration of its mass at $t=0$, while the wider $q$ has more weight distributed throughout the stencil width. The difference of the Gaussians illustrates the signed structure, which is positive at the center and negative near the edge of the stencils. The coefficient $1$-norm calculates the $\sum|c_{t}|$ to be $\approx 0.556$, which is below the universal bound $\lambda = 2$ used in the block encoding. Although a tighter normalization could be achieved, it would require us to concede the structured prep-shift structure that makes the circuit explicit and efficient.

\begin{figure*}
    \centering
    \includegraphics[width=1\linewidth]{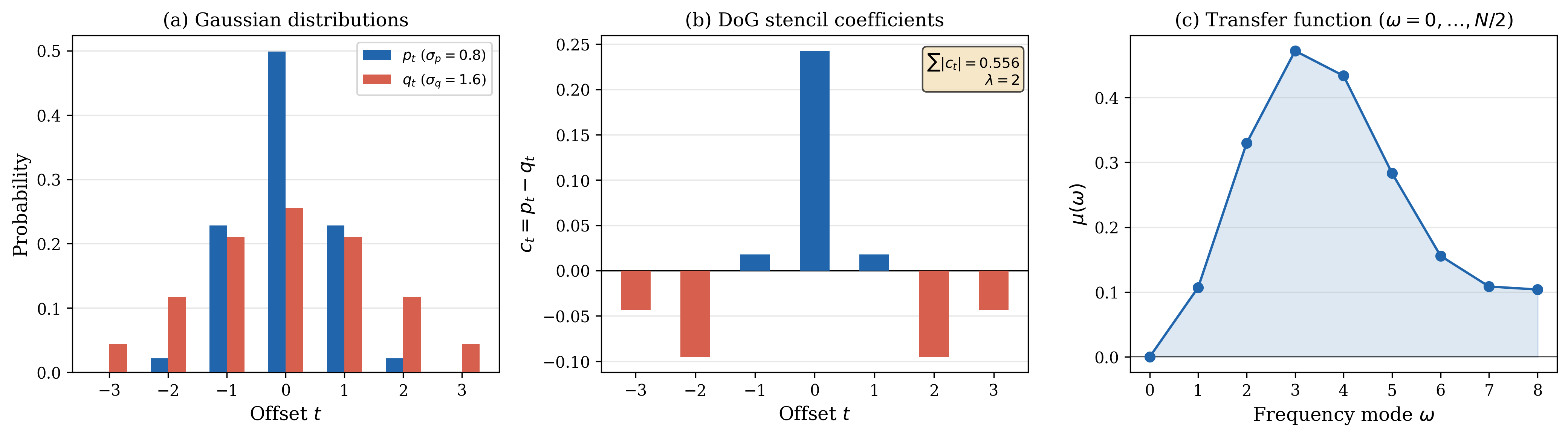}
    \caption{Plots showing the (a) Gaussian kernel shapes, (b) the stencil coefficients, and (c) the transfer function with respect to the $\omega$. The values are for the example with the parameters:  $D=1$, $\sigma_p=0.8$, $\sigma_q=1.6$, $T=\{-3,\ldots,3\}$}
    \label{fig:freq_response}
\end{figure*}

\begin{figure*}
    \centering
    \includegraphics[width=1\linewidth]{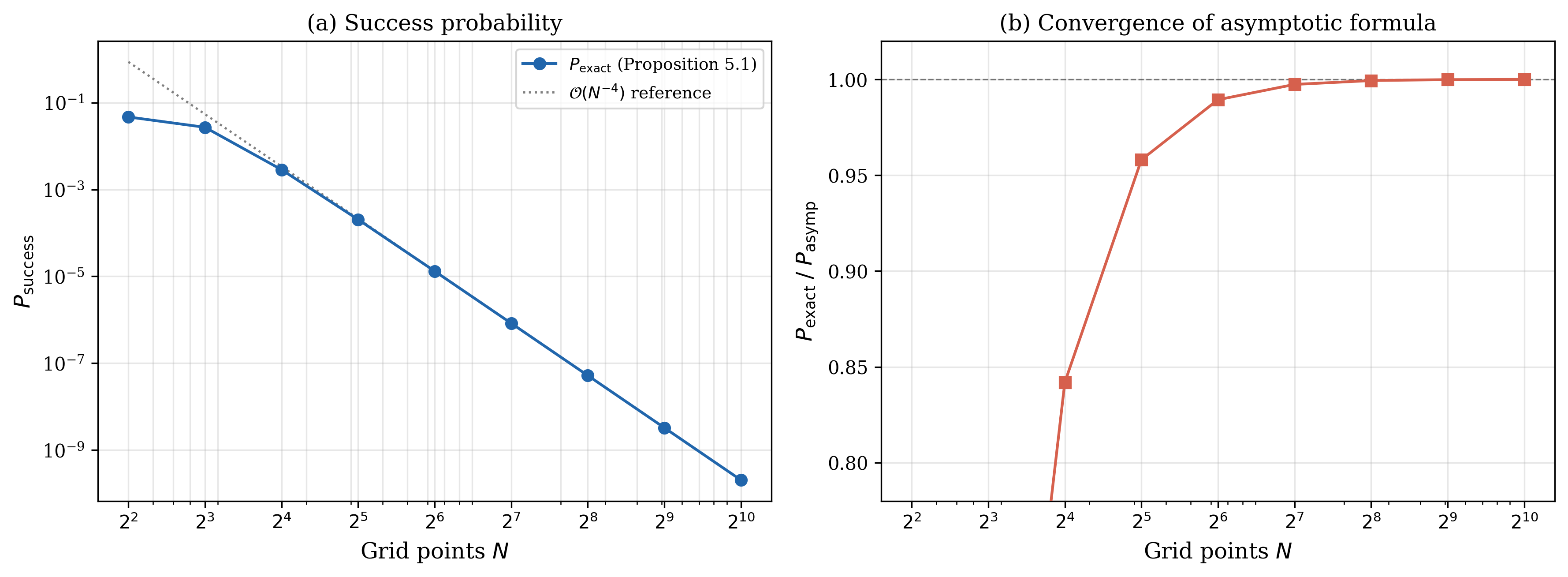}
    \caption{The scaling of success probability with respect to grid points, $N$ where $N=h^{-1}$}
    \label{fig:success_prob}
\end{figure*}

\section{Discussion}
In this work, we develop an explicit block encoding of the DoG operator on a discretized periodic grid, based on the observation that the DoG stencil decomposes as a difference of two normalized discrete Gaussian distributions, which can be mapped to a linear combination of shifts within the LCU framework. The construction achieves an inherent constant subnormalization using a Gaussian state-preparation method, controlled-shift operators, and three single-qubit Clifford gates.

We derive a success probability scaling of $\mathcal{O}(h^4)$ as the grid becomes more refined, which is a fundamental constraint arising from the $h^2$ factor in any second-order differential approximation and therefore, matches the scaling of explicit Laplacian encodings. The constant subnormalization of $\lambda = 2$ achieved in this method realizes benefits when the DoG itself is the target of downstream quantum primitives, for example, in QSVT-based spectral transformation regimes, where it is important to avoid spectral compression. We finally show that for fixed-resolution filtering, the exact success probability depends directly on the signal's spectral overlap with the DoG passband.

The block encoding of the DoG operator can be directly used to implement various applications in quantum computing without the need for black-box oracles or expensive NEQR-based arithmetic circuits used in quantum image processing.

In a broader perspective, we observe that our method applies to any operator expressible as a signed linear combination of normalized distributions over unitary shifts, suggesting potential extensions beyond the Gaussian case. Future work includes resource benchmarking via Qualtran \cite{harrigan2024expressing}, extension to the Dirichlet and Neumann boundary conditions, and numerical experiments applying the DoG block encoding within QSVT-based image processing pipelines.


\section*{Competing interests}
All authors declare no financial or non-financial competing interests.

\bibliographystyle{unsrt}
\bibliography{bib}

@article{kharazi2025explicit,
  title={Explicit block encodings of boundary value problems for many-body elliptic operators},
  author={Kharazi, Tyler and Alkadri, Ahmad M and Liu, Jin-Peng and Mandadapu, Kranthi K and Whaley, K Birgitta},
  journal={Quantum},
  volume={9},
  pages={1764},
  year={2025},
  publisher={Verein zur F{\"o}rderung des Open Access Publizierens in den Quantenwissenschaften}
}

@article{sotak1989laplacian,
  title={The Laplacian-of-Gaussian kernel: a formal analysis and design procedure for fast, accurate convolution and full-frame output},
  author={Sotak Jr, George E and Boyer, Kim L},
  journal={Computer vision, graphics, and image processing},
  volume={48},
  number={2},
  pages={147--189},
  year={1989},
  publisher={Elsevier}
}

@article{neycenssac1993contrast,
  title={Contrast enhancement using the Laplacian-of-a-Gaussian filter},
  author={Neycenssac, Franck},
  journal={CVGIP: Graphical Models and Image Processing},
  volume={55},
  number={6},
  pages={447--463},
  year={1993},
  publisher={Elsevier}
}

@article{kong2013generalized,
  title={A generalized Laplacian of Gaussian filter for blob detection and its applications},
  author={Kong, Hui and Akakin, Hatice Cinar and Sarma, Sanjay E},
  journal={IEEE transactions on cybernetics},
  volume={43},
  number={6},
  pages={1719--1733},
  year={2013},
  publisher={IEEE}
}

@article{polakowski1997computer,
  title={Computer-aided breast cancer detection and diagnosis of masses using difference of Gaussians and derivative-based feature saliency},
  author={Polakowski, William E and Cournoyer, Donald A and Rogers, Steven K and DeSimio, Martin P and Ruck, Dennis W and Hoffmeister, Jeffrey W and Raines, Richard A},
  journal={IEEE transactions on medical imaging},
  volume={16},
  number={6},
  pages={811--819},
  year={1997},
  publisher={IEEE}
}

@inproceedings{assirati2014performing,
  title={Performing edge detection by Difference of Gaussians using q-Gaussian kernels},
  author={Assirati, Lucas and Silva, N{\'u}bia Rosa da and Berton, Lilian and Lopes, Alneu de A and Bruno, Odemir M},
  booktitle={Journal of Physics: Conference Series},
  volume={490},
  number={1},
  pages={012020},
  year={2014},
  organization={IOP Publishing}
}

@article{fan2023quantum,
  title={Quantum image edge extraction based on difference of Gaussian operator.},
  author={Fan, Ping and Xiao, Ke},
  journal={Quantum Information Processing},
  volume={22},
  number={1},
  year={2023}
}

@article{mu2016multiscale,
  title={Multiscale edge fusion for vehicle detection based on difference of Gaussian},
  author={Mu, Kenan and Hui, Fei and Zhao, Xiangmo and Prehofer, Christian},
  journal={Optik},
  volume={127},
  number={11},
  pages={4794--4798},
  year={2016},
  publisher={Elsevier}
}

@article{yao2017quantum,
  title={Quantum image processing and its application to edge detection: Theory and experiment},
  author={Yao, Xi-Wei and Wang, Hengyan and Liao, Zeyang and Chen, Ming-Cheng and Pan, Jian and Li, Jun and Zhang, Kechao and Lin, Xingcheng and Wang, Zhehui and Luo, Zhihuang and others},
  journal={Physical Review X},
  volume={7},
  number={3},
  pages={031041},
  year={2017},
  publisher={APS}
}

@article{wu2022quantum,
  title={Quantum SUSAN edge detection based on double chains quantum genetic algorithm},
  author={Wu, Chenyi and Huang, Fei and Dai, Jingyi and Zhou, Nanrun},
  journal={Physica A: Statistical Mechanics and its Applications},
  volume={605},
  pages={128017},
  year={2022},
  publisher={Elsevier}
}

@article{sundani2019identification,
  title={Identification of image edge using quantum canny edge detection algorithm},
  author={Sundani, Dini and Widiyanto, Sigit and Karyanti, Yuli and Wardani, Dini Tri},
  journal={Journal of ICT research and applications},
  volume={13},
  number={2},
  pages={133--144},
  year={2019}
}

@article{chetia2021quantum,
  title={Quantum image edge detection using improved Sobel mask based on NEQR.},
  author={Chetia, Rajib and Boruah, SMB and Sahu, PP},
  journal={Quantum Information Processing},
  volume={20},
  number={1},
  year={2021}
}

@article{zhou2019quantum,
  title={Quantum image edge extraction based on improved Prewitt operator.},
  author={Zhou, Ri-Gui and Yu, Han and Cheng, Yu and Li, Feng-Xin},
  journal={Quantum Information Processing},
  volume={18},
  number={9},
  year={2019}
}

@article{liu2023quantum,
  title={Quantum image edge detection based on eight-direction Sobel operator for NEQR},
  author={Liu, Wenjie and Wang, Lu},
  journal={arXiv preprint arXiv:2310.03037},
  year={2023}
}

@inproceedings{yuan2019fast,
  title={Fast Laplacian of Gaussian edge detection algorithm for quantum images},
  author={Yuan, Suzhen and Venegas-Andraca, Salvador E and Zhu, Chaoping and Wang, Yan and Mao, Xuefeng and Luo, Yuan},
  booktitle={2019 IEEE International Conferences on Ubiquitous Computing \& Communications (IUCC) and Data Science and Computational Intelligence (DSCI) and Smart Computing, Networking and Services (SmartCNS)},
  pages={798--802},
  year={2019},
  organization={IEEE}
}

@article{li2020quantum,
  title={Quantum implementation of classical Marr--Hildreth edge detection.},
  author={Li, Panchi and Shi, Tong and Lu, Aiping and Wang, Bing},
  journal={Quantum Information Processing},
  volume={19},
  number={2},
  year={2020}
}

@article{childs2012hamiltonian,
  title={Hamiltonian simulation using linear combinations of unitary operations},
  author={Childs, Andrew M and Wiebe, Nathan},
  journal={arXiv preprint arXiv:1202.5822},
  year={2012}
}

@article{yuan2024quantum,
  title={Quantum image edge detection based on Laplacian of Gaussian operator},
  author={Yuan, Suzhen and Zhao, Wenhao and Deng, Jeremiah D and Xia, Shuyin and Li, Xianli},
  journal={Quantum information processing},
  volume={23},
  number={5},
  pages={178},
  year={2024},
  publisher={Springer}
}

@article{sturm2025efficient,
  title={Efficient and Explicit Block Encoding of Finite Difference Discretizations of the Laplacian},
  author={Sturm, Andreas and Schillo, Niclas},
  journal={arXiv preprint arXiv:2509.02429},
  year={2025}
}

@inproceedings{gilyen2019quantum,
  title={Quantum singular value transformation and beyond: exponential improvements for quantum matrix arithmetics},
  author={Gily{\'e}n, Andr{\'a}s and Su, Yuan and Low, Guang Hao and Wiebe, Nathan},
  booktitle={Proceedings of the 51st annual ACM SIGACT symposium on theory of computing},
  pages={193--204},
  year={2019}
}

@article{berry2007efficient,
  title={Efficient quantum algorithms for simulating sparse Hamiltonians},
  author={Berry, Dominic W and Ahokas, Graeme and Cleve, Richard and Sanders, Barry C},
  journal={Communications in Mathematical Physics},
  volume={270},
  number={2},
  pages={359--371},
  year={2007},
  publisher={Springer}
}

@article{childs2017quantum,
  title={Quantum algorithm for systems of linear equations with exponentially improved dependence on precision},
  author={Childs, Andrew M and Kothari, Robin and Somma, Rolando D},
  journal={SIAM Journal on Computing},
  volume={46},
  number={6},
  pages={1920--1950},
  year={2017},
  publisher={SIAM}
}

@article{camps2024explicit,
  title={Explicit quantum circuits for block encodings of certain sparse matrices},
  author={Camps, Daan and Lin, Lin and Van Beeumen, Roel and Yang, Chao},
  journal={SIAM Journal on Matrix Analysis and Applications},
  volume={45},
  number={1},
  pages={801--827},
  year={2024},
  publisher={SIAM}
}

@article{liszka1980finite,
  title={The finite difference method at arbitrary irregular grids and its application in applied mechanics},
  author={Liszka, Tadeusz and Orkisz, Janusz},
  journal={Computers \& Structures},
  volume={11},
  number={1-2},
  pages={83--95},
  year={1980},
  publisher={Elsevier}
}

@book{vuik2023numerical,
  title={Numerical methods for ordinary differential equations},
  author={Vuik, Kees and Vermolen, Fred and van Gijzen, Martin and Vuik, Thea},
  year={2023},
  publisher={TU Delft Open Publishing}
}

@article{mirzaee2022solution,
  title={Solution of time-fractional stochastic nonlinear sine-Gordon equation via finite difference and meshfree techniques},
  author={Mirzaee, Farshid and Rezaei, Shadi and Samadyar, Nasrin},
  journal={Mathematical Methods in the Applied Sciences},
  volume={45},
  number={7},
  pages={3426--3438},
  year={2022},
  publisher={Wiley Online Library}
}

@article{leveque1998finite,
  title={Finite difference methods for differential equations},
  author={LeVeque, Randall J},
  journal={Draft version for use in AMath},
  volume={585},
  number={6},
  pages={112},
  year={1998}
}

@book{smith1985numerical,
  title={Numerical solution of partial differential equations: finite difference methods},
  author={Smith, Gordon D},
  year={1985},
  publisher={Oxford university press}
}

@article{marr1980theory,
  title={Theory of edge detection},
  author={Marr, David and Hildreth, Ellen},
  journal={Proceedings of the Royal Society of London. Series B. Biological Sciences},
  volume={207},
  number={1167},
  pages={187--217},
  year={1980},
  publisher={The Royal Society London}
}

@article{berry2015simulating,
  title={Simulating Hamiltonian dynamics with a truncated Taylor series},
  author={Berry, Dominic W and Childs, Andrew M and Cleve, Richard and Kothari, Robin and Somma, Rolando D},
  journal={Physical review letters},
  volume={114},
  number={9},
  pages={090502},
  year={2015},
  publisher={APS}
}

@article{kuklinski2025efficient,
  title={Efficient block-encodings require structure},
  author={Kuklinski, Parker and Rempfer, Benjamin and Elenewski, Justin and Obenland, Kevin},
  journal={arXiv preprint arXiv:2509.19667},
  year={2025}
}

@article{berry2019qubitization,
  title={Qubitization of arbitrary basis quantum chemistry leveraging sparsity and low rank factorization},
  author={Berry, Dominic W and Gidney, Craig and Motta, Mario and McClean, Jarrod R and Babbush, Ryan},
  journal={Quantum},
  volume={3},
  pages={208},
  year={2019},
  publisher={Verein zur F{\"o}rderung des Open Access Publizierens in den Quantenwissenschaften}
}

@article{babbush2018encoding,
  title={Encoding electronic spectra in quantum circuits with linear T complexity},
  author={Babbush, Ryan and Gidney, Craig and Berry, Dominic W and Wiebe, Nathan and McClean, Jarrod and Paler, Alexandru and Fowler, Austin and Neven, Hartmut},
  journal={Physical Review X},
  volume={8},
  number={4},
  pages={041015},
  year={2018},
  publisher={APS}
}

@article{zecchi2026block,
  title={Block encoding of sparse matrices with a periodic diagonal structure},
  author={Zecchi, Alessandro Andrea and Sanavio, Claudio and Cappelli, Luca and Perotto, Simona and Roggero, Alessandro and Succi, Sauro},
  journal={arXiv preprint arXiv:2602.10589},
  year={2026}
}

@article{kuklinski2025simpler,
  title={A simpler Gaussian state-preparation},
  author={Kuklinski, Parker and Rempfer, Benjamin and Obenland, Kevin and Elenewski, Justin},
  journal={arXiv preprint arXiv:2508.03987},
  year={2025}
}

@article{harrigan2024expressing,
  title={Expressing and analyzing quantum algorithms with qualtran},
  author={Harrigan, Matthew P and Khattar, Tanuj and Yuan, Charles and Peduri, Anurudh and Yosri, Noureldin and Malone, Fionn D and Babbush, Ryan and Rubin, Nicholas C},
  journal={arXiv preprint arXiv:2409.04643},
  year={2024}
}

@article{bungartz2004sparse,
  title={Sparse grids},
  author={Bungartz, Hans-Joachim and Griebel, Michael},
  journal={Acta numerica},
  volume={13},
  pages={147--269},
  year={2004},
  publisher={Cambridge University Press}
}

@article{sogabe2022krylov,
  title={Krylov subspace methods for linear systems},
  author={Sogabe, Tomohiro},
  journal={Springer Series in Computational Mathematics},
  year={2022},
  publisher={Springer}
}

@article{plesch2011quantum,
  title={Quantum-state preparation with universal gate decompositions},
  author={Plesch, Martin and Brukner, {\v{C}}aslav},
  journal={Physical Review A—Atomic, Molecular, and Optical Physics},
  volume={83},
  number={3},
  pages={032302},
  year={2011},
  publisher={APS}
}

@article{draper2004logarithmic,
  title={A logarithmic-depth quantum carry-lookahead adder},
  author={Draper, Thomas G and Kutin, Samuel A and Rains, Eric M and Svore, Krysta M},
  journal={arXiv preprint quant-ph/0406142},
  year={2004}
}

@article{bocharov2015efficient,
  title={Efficient synthesis of universal repeat-until-success quantum circuits},
  author={Bocharov, Alex and Roetteler, Martin and Svore, Krysta M},
  journal={Physical review letters},
  volume={114},
  number={8},
  pages={080502},
  year={2015},
  publisher={APS}
}

\end{document}